\newtheorem{theorem}{Theorem}[section]
\newenvironment{proof}[1][Proof]{\begin{trivlist}
\item[\hskip \labelsep {\bfseries #1}]}{\end{trivlist}}
\newcommand{\qed}{\nobreak \ifvmode \relax \else
      \ifdim\lastskip<1.5em \hskip-\lastskip
      \hskip1.5em plus0em minus0.5em \fi \nobreak
      \vrule height0.75em width0.5em depth0.25em\fi}
\newcommand{\field}[1]{\mathbb{#1}}
\newcommand{\TYNDALL}{Tyndall National Institute, University College Cork, Lee Maltings, Cork, Ireland}
\newcommand{\MATHUCC}{School of Mathematical Sciences, University College Cork, Cork, Ireland}
\newcommand{\PHYSUCC}{Department of Physics, University College Cork, Cork, Ireland}
\begin{document}

\title{Multi-stabilities and symmetry-broken one-colour and two-colour states\\in closely coupled single-mode lasers}

\author{Eoin Clerkin}
\email{eoin@clerkin.biz}
\affiliation{\TYNDALL}
\affiliation{\PHYSUCC}

\author{Stephen O'Brien}
\affiliation{\TYNDALL}

\author{Andreas Amann}
\affiliation{\TYNDALL}
\affiliation{\MATHUCC}

\date{\today}

\pacs{
  47.20.Ky,	
  05.45.Xt,	
  42.55.Px,	
  42.65.Pc	
}

\begin{abstract}
We theoretically investigate the dynamics of two mutually coupled identical single-mode semi-conductor lasers.  For small separation and large coupling between the lasers, symmetry-broken one-colour states are shown to be stable.  In this case the light output of the lasers have significantly different intensities while at the same time the lasers are locked to a single common frequency.  For intermediate coupling we observe stable symmetry-broken two-colour states, where both lasers lase simultaneously at two optical frequencies which are separated by up to 150~GHz.  Using a five dimensional model we identify the bifurcation structure which is responsible for the appearance of symmetric and symmetry-broken one-colour and two-colour states. Several of these states give rise to multi-stabilities and therefore allow for the design of all-optical memory elements on the basis of two coupled single-mode lasers. The switching performance of selected designs of optical memory elements is studied numerically.
\end{abstract}

\maketitle

\section{Introduction}

A system of two mutually coupled semiconductor lasers is a simple example of interacting nonlinear oscillators. This system has been experimentally realised using for example edge emitting \cite{HEI01}, quantum dot \cite{HEG07}, VCSEL \cite{FUJ03} and DBR \cite{VAU09} lasers and the observed dynamical phenomena include leader-laggard dynamics \cite{HEI01}, frequency locking and intensity pulsations \cite{WUE05} and bubbling \cite{FLU09,TIA12}.  For recent reviews on the rich dynamical features of coupled semiconductor lasers we refer to \cite{LUE12,SOR13}.  

In the current paper we study this problem theoretically on the basis of a well established rate equation model.  In general, there are two time scales which govern the character of the dynamics in coupled semiconductor lasers, namely the period of the relaxation oscillation $T_R=1/\nu_R$  and the delay time $\tau$ due to the separation between the two lasers.  In the long delay case $\tau\gg T_R$, it is well known that the synchronous state of two coupled lasers is in general not stable \cite{WHI02}, which gives rise to  leader-laggard dynamics and noise induced switching between the  asymmetric states \cite{MUL04}.  In view of applications in small-scale and high-speed devices, we are however interested in the opposite limit, where the delay time is comparable to or even much smaller than the relaxation period.  In this limit the rate equation approach is justified if the distance between the lasers is significantly larger than the optical wavelength itself.  At even shorter distances, composite cavity 
models have recently been used to describe effects due to evanescent lateral coupling \cite{ERZ08,BLA12}.  Using typical values for the relaxation oscillation in the GHz regime and an optical wavelength of around 1~$\mu$m, we therefore now focus on the case of two coupled lasers at distances between 10~$\mu$m and 500~$\mu$m. 

We observe that for small spatial separation and strong optical coupling, the two lasers can mutually lock into one of two stable one-colour states. These states spontaneously break the original symmetry of exchanging the two lasers, and both lasers emit light at precisely the same frequency but at different intensities. In this bi-stable regime the current state of the system can be conveniently observed optically from the amplitude of the light output of the lasers.  There also exists a region of hysteresis, where the symmetric one-colour state is jointly stable with the two symmetry broken one-colour states, thus giving rise to a parameter region of tri-stability. For more moderate coupling strength, the symmetry broken one-colour states become unstable, and instead symmetry broken two-colour states appear, which are similar to the ones observed numerically in \cite{ROG03}.  In this case, both lasers emit light at the same two optical frequencies, however the intensities of the respective colours are 
not identical in both lasers.  

In order to study these and similar transitions in the dynamics  systematically, we introduce a reduced five dimensional model, which allows for a bifurcation analysis using the continuation software AUTO \cite{DOE06}.  The transition from a one-colour to a two-colour symmetry broken state then corresponds to a Hopf bifurcation from a symmetry-broken fixed-point state to a symmetry-broken limit cycle state in the language of bifurcation theory.  We are in particular interested in the fundamental bifurcations, which bound the domain of symmetry broken two-colour states. We identify the relevant codimension two points, which organise the bifurcation scenario in this region, and explain the mechanism which gives rise to the large region of bi-stability between symmetry-broken two-colour states. 

From the technological side there exists currently a strong interest in the development of small-scale devices which are capable of all-optical signal processing.  One particular challenge, which has attracted a significant amount of recent research  activities,  is the design of all-optical memory elements  \cite{HIL04,OSB09,LIU10,CHE11,HEI11,NOZ12,PER12}.  The goal is to design fast and efficient memory elements which can be switched between at least two different stable states via an external optical signal.  At the same time it is also desired that the  state of the memory element is accessible optically, and the optical output from one memory element should be able to trigger the switching in further elements, with little or no intermediate processing.  Based on these criteria, the question arises, if all-optical memory elements can be realised using two mutually interacting identical single-mode lasers.  After having identified a number of interesting regions of multi-stabilities involving one-colour 
and two-colour states, we numerically demonstrate that these states can indeed be exploited for the design of all-optical memory units. By injecting suitable pulses of light into the coupled lasers we are able to switch between different multi-stable states.  In order to assess the technologically relevant speed of the switching events, we define the write time as the minimal injected pulse duration required to trigger the switching event, and the read time as the minimum time after which the state of the memory can be obtained optically.  We find read and write times of less than 100ps, which suggests the possibility of fast and simple all-optical memory elements on the basis of identical closely coupled single-mode lasers.

\section{Rate Equation Model}

\begin{figure}
\includegraphics[width=1.0\columnwidth]{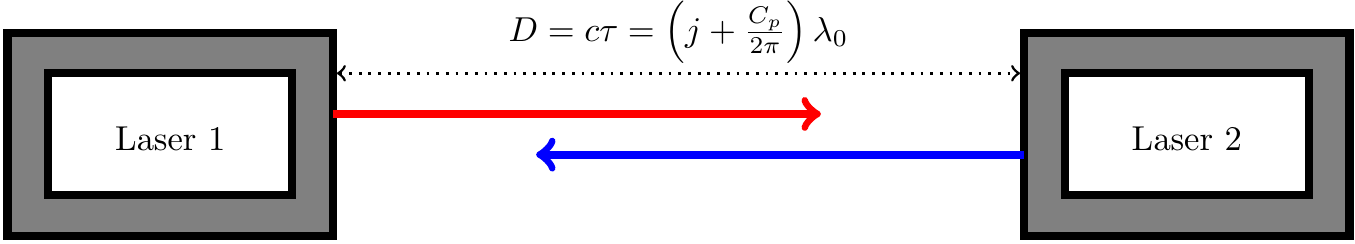}
\caption{\label{fg:scheme} Schematic diagram of two coupled lasers of wavelength $\lambda_0$ separated by a distance $D$, where  $C_p \in [0,2\pi)$ and $j \in \field{N}$.}
\end{figure}

Two identical single-mode semiconductor lasers with free-running wavelength $\lambda_0$ are placed in a face-to-face alignment with a separation $D$ as sketched in Fig.~\ref{fg:scheme}. They are mutually coupled via a certain amount of the light of each entering the other cavity after a delay $\tau = D/c$, where $c$ denotes the speed of light. This scenario has been successfully studied in the literature on the basis of rate equation models \cite{HOH97,MUL02,ROG03,YAN04,ERZ06} and we will use the following system of delay differential equations:

\begin{subequations}\label{eq:mLK}
\begin{eqnarray}
\dot{E_1}(t) =& (1 + i \, \alpha) \, N_1(t) \, E_1(t) + \kappa e^{-i \, {C_p}} E_2 (t -\tau) \label{eq:E1dot}\\
\dot{E_2}(t) =& (1 + i \, \alpha) \, N_2(t) \, E_2(t) + \kappa e^{-i \, {C_p}} E_1 (t -\tau) \label{eq:E2dot} \\
\dot{N_1}(t) =& \frac{1}{T}\left[P - N_1(t) - \left( 1 + 2N_1(t) \right) \, |E_1(t)|^2\right]  \label{eq:N1dot}\\
\dot{N_2}(t) =& \frac{1}{T}\left[P - N_2(t) - \left( 1 + 2N_2(t) \right) \, |E_2(t)|^2\right] \label{eq:N2dot}
\end{eqnarray}
\end{subequations}

The parameters include $P=0.23$, the pumping of electron-hole pairs into each laser; $\alpha=2.6$, the line-width enhancement factor; $T=392$, the ratio of the photonic and carrier lifetimes; $\tau$ the dimensionless delay time between the lasers. The coupling strength $\kappa$ and the coupling phase $C_p\,=\,2\pi ( D\,mod\,\lambda_0)$ are the two main bifurcation parameters in subsequent sections. The model~\eqref{eq:mLK} is dimensionless with time measured in units of the photon lifetime $\tau_p = 1.0204$ ps.  The dynamical variables $N_1$ and $N_2$ denote the population inversions, and $E_1$ and $E_2$ are the slowly varying complex optical fields in laser 1 and laser 2.  The rapidly oscillating physical fields can be recovered via $\tilde{E}_{1,2} = E_{1,2}(t) e^{i \omega_0 t}$, where the optical angular frequency is given by $ \omega_0= 2\pi c / \lambda_0$. The phase factor $e^{-i C_p}$  in system~\eqref{eq:mLK} is therefore a consequence of expressing the time delayed physical fields using slowly 
varying fields via $\tilde{E}_{1,2}(t-\tau) e^{-i \omega_0 t} = e^{-i C_p} E_{1,2}(t-\tau)$.

Any solution of equations~\eqref{eq:mLK} can be multiplied by a common phase factor in both optical fields leading to a $S^1$ symmetry. In addition, as the two lasers are identical, a $\field{Z}_2$ symmetry exists due to the ability to swap the lasers. Mathematically  these two phase space symmetries can be formulated as \cite{ERZ06},
\begin{equation}
\label{eq:symm}
\begin{aligned}
 & \left(E_1,E_2\right) \rightarrow \left(e^{i\,b}E_1,e^{i\,b}E_2\right), \: b \in [0,2\pi) \:  & S^1 \text{ symmetry} \\
 & \left(E_1,E_2,N_1,N_2\right) \rightarrow \left(E_2,E_1,N_2,N_1\right) \: & \field{Z}_2 \text{ symmetry}
\end{aligned}
\end{equation}
Both the $S^1$ and $\field{Z}_2$ symmetries are frequently used and referred to in subsequent sections. 

\section{One-Colour States}\label{sec:CLM}

\begin{figure}
\includegraphics[width=1.0\linewidth,type=pdf,ext=.pdf,read=.pdf]{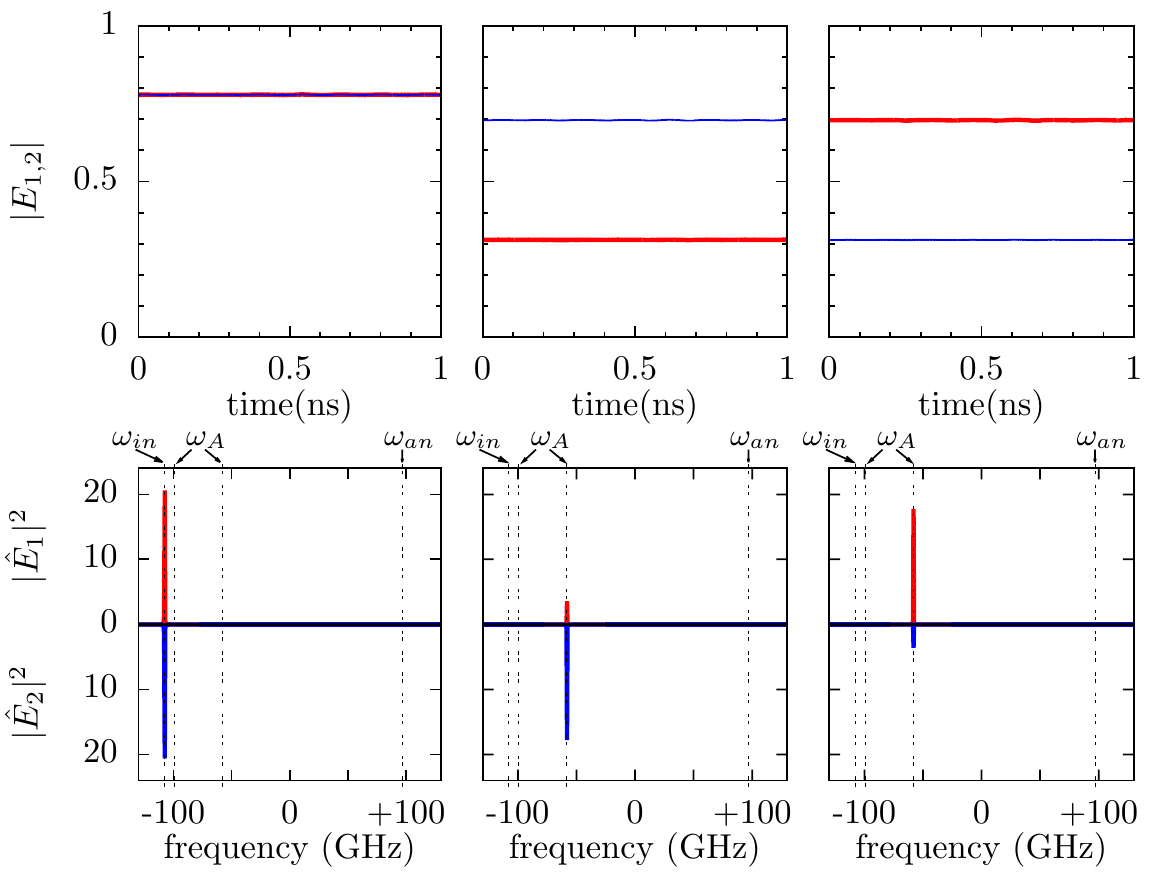}
\caption{\label{fg:CLMs} Time-traces (top panels) and frequency spectra (bottom panels) showing symmetric (left column) and symmetry-broken (middle and right columns) CLMs for $\tau=0.1$, $\kappa=0.3$ and ${C_p}=0.33\,\pi$.  These parameters are consistent with a point in region 7 of Fig.~\ref{fg:map}.}
\end{figure}

One-colour states, which are also known as compound laser modes (CLMs)~\cite{ERZ06} are constant amplitude and single frequency solutions to~\eqref{eq:mLK} whereby the two lasers are frequency locked. They are characterised by the following ansatz,  
\begin{equation}
\label{eq:1C}
\begin{aligned}
& E_1(t) = A_1 e^{i\,\omega_A\,t}& \qquad &N_1(t)= N_1^c\\
& E_2(t) = A_2 e^{i\,\omega_A\,t} e^{i\,\delta_A}& &N_2(t)= N_2^c
\end{aligned}
\end{equation}
with real constants $A_1$, $A_2$, $N_1^c$, $N_2^c$, $\delta_A$ and $\omega_A$. The frequency $\omega_A$ is the common locked  frequency of the slowly varying optical fields of the two lasers and $\delta_A$ allows for a constant phase difference between them. This ansatz can be split into two complementary classes; (i) symmetric CLMs ($A_1=A_2$) which are invariant under the $\mathbb{Z}_2$ symmetry and (ii) symmetry-broken CLMs where both lasers lase at different intensities ($A_1\neq A_2$). Further details on this classification are given in Appendix~\ref{ap:ssbCLM}. 

Symmetric CLMs can be further sub-divided into ``in-phase'' ($\delta_A=0$) and ``anti-phase'' ($\delta_A=\pi$) solutions and their stability is extensively studied in \cite{ERZ06,YAN04}.  In particular, it was found that symmetric CLMs can lose their stability via Hopf, saddle node or pitchfork bifurcations, and the stability boundaries were obtained via numerical continuation techniques \cite{ERZ06} or in the instantaneous limit $\tau=0$ analytically by using the characteristic equation of the system \cite{YAN04}.  In the literature, symmetry-broken CLMs play a role in the stability analysis of symmetric CLMs, but are themselves not stable. 

A principal result of this paper is that \emph{symmetry-broken} CLMs are shown to be stable for small delay and relatively high coupling between the lasers. This is demonstrated numerically in the middle and right columns of Fig.~\ref{fg:CLMs} where optical field intensities and frequency spectra of two stable symmetry-broken states are plotted.  Due to the $\mathbb{Z}_2$ symmetry of exchanging the two lasers, symmetry-broken CLMs always exist in pairs.  For purposes of display, a parameter set was chosen where a symmetric CLM is also stable as shown in the left column of Fig.~\ref{fg:CLMs}, giving rise to a tri-stability in CLMs.  

The frequencies of the CLMs in the bottom panels of Fig.~\ref{fg:CLMs}, can be analytically determined by plugging the ansatz~\eqref{eq:1C} into the system of ODEs~\eqref{eq:mLK} \cite{ERZ06,YAN04}. The equations for $\dot{E}_1$ and  $\dot{E}_2$ yield  the following relation between the frequency $\omega_A$ of the CLM and the phase difference $\delta_A$ between the lasers 
\begin{equation}
\label{eq:freqs}
{\frac{\omega^{2}_A}{ \kappa^{ 2} \left( 1+ \alpha^2 \right)} } = \sin^{2}\left( {C_p} + \omega_A\tau + \arctan\alpha\right) - \sin^{ 2}\delta_A
\end{equation}
For in-phase $\omega_{in}=\omega_A(\delta_A=0)$ and anti-phase $\omega_{an}=\omega_A(\delta_A=\pi)$ CLMs, this immediately gives an implicit function for the frequency.  For the  symmetry-broken one-colour states one additionally needs to consider the fixed point solutions $\dot{N}_{1,2}=0$ for the inversions. An implicit solution is calculated in 
Appendix~\ref{ap:omegaA} for the system with delay time $\tau$. A secant method is then used to obtain the numerical values appearing as dot-dashed vertical lines in Fig.~\ref{fg:CLMs}.

\section{Two-Colour States}\label{sec:2C}

\begin{figure}
\includegraphics[width=1.0\linewidth,type=pdf,ext=.pdf,read=.pdf]{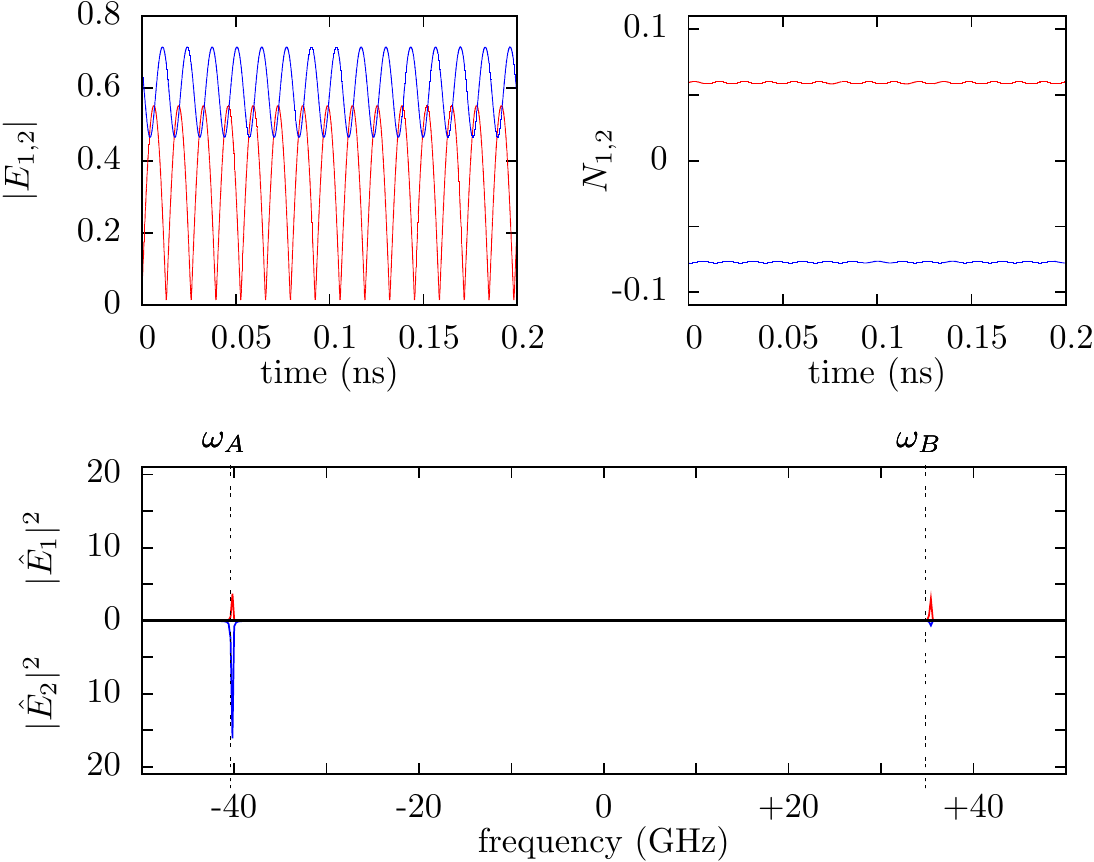}
\caption{\label{fg:2CSB} Numerical solution of system~\eqref{eq:mLK} for $\tau=0.3, \kappa=0.15, C_p=0.33 \pi$ which is consistent with region 4 of Fig.~\ref{fg:map}. The top left panel contains the magnitude of the electric fields, and the top right their inversions. The bottom panel shows the optical spectrum relative to the frequency of the free running lasers. The dot-dashed vertical lines are the frequencies calculated from ansatz~\eqref{eq:2C}.}
\end{figure}

\begin{figure*}[t]
\centering
\includegraphics[width=1.0\linewidth]{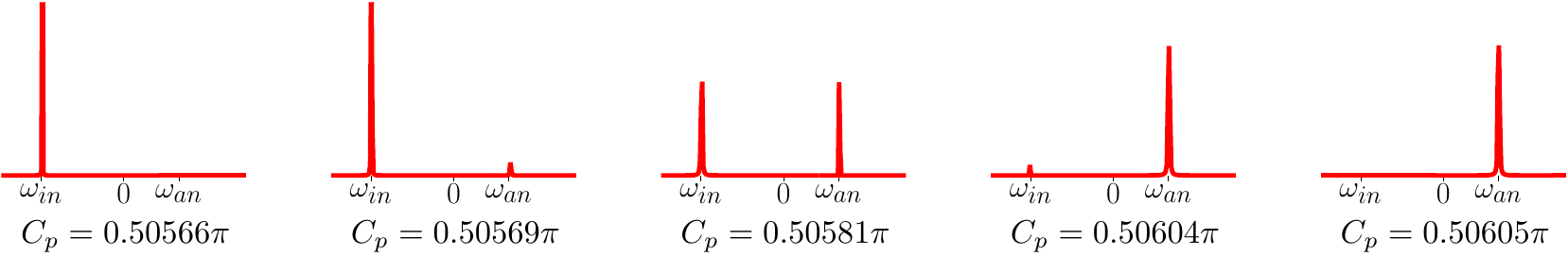}
\caption{\label{fg:BS} Transition between in-phase (left) and anti-phase (right) CLMs for $\tau=0.2$, $\kappa=0.4$.
The in-phase ($\omega_{in} = -74.19~GHz$) and anti-phase ($\omega_{an} = +49.46~GHz$) one-colour frequencies are obtained from Eq.~\eqref{eq:freqs}.}
\end{figure*}

In this section we introduce two-colour states which are stable for small delay and moderate to low coupling strength between the lasers.  We stress that the two-colour states are induced by the coupling between the lasers alone, the uncoupled lasers are single-mode only.  Like the one-colour states, the two-colour states can either be symmetric or symmetry-broken with respect to the $\mathbb{Z}_2$ symmetry of exchanging the two lasers. 

An example of a stable symmetry-broken two-colour state is shown in Fig.~\ref{fg:2CSB}.  Due to the $\field{Z}_2$ symmetry of being able to exchange the two lasers, a twin two-colour state is also stable.  In the optical spectrum in the bottom panel of Fig.~\ref{fg:2CSB} we see that there are indeed only two dominating frequencies $\omega_A$ and $\omega_B$ and both lasers lase at these two frequencies, but with unequal intensities.  In the time traces of the field amplitudes this gives rise to beating oscillations as shown in the upper left panel of Fig.~\ref{fg:2CSB}.   The corresponding inversions shown in the upper right panel of Fig.~\ref{fg:2CSB} also reflects the symmetry-broken nature of this state and in addition shows small oscillations at the beating frequency.

An example for symmetric two-colour states and their connection with symmetric CLMs is shown in Fig.~\ref{fg:BS}.  Starting from the in-phase CLM in the left hand panel of Fig.~\ref{fg:BS} and increasing the coupling phase $C_p$, we observe a torus bifurcation where the frequency of the anti-phase CLM is turned on. Increasing $C_p$ further smoothly transfers power from the in-phase mode to the anti-phase mode until a second torus bifurcation kills the in-phase CLM's frequency and only the anti-phase CLM remains.  Frequency spectrum snapshots as the parameter $C_p$ is changed are shown from left to right in Fig.~\ref{fg:BS}.  We note that symmetric two-colour states exist only over very small ranges of the coupling phase $C_p$, however we will show in Sect.~\ref{sec:map} that they are crucial for the overall understanding of the bifurcation structure in closely coupled single mode lasers. The symmetric two-colour states can be interpreted as beating between symmetric CLMs. This is similar to the beating 
between delay-created external cavity modes for a single laser with mirror \cite{ERN00}.

A useful ansatz~\cite{ROG03} which approximates the dynamics of symmetric and symmetry-broken two-colour states is given by
\begin{equation}
\label{eq:2C}
\begin{aligned}
E_1(t)=A_1 e^{i\,\omega_A\,t}& + B_1 e^{i\,\omega_B\,t},  &N_1(t)= N_1^c,\\
E_2(t)=A_2 e^{i\,\omega_A\,t}& e^{i\,\delta_A} + B_2 e^{i\,\omega_B\,t} e^{i\,\delta_B}, &N_2(t)= N_2^c,
\end{aligned}
\end{equation}
with real constants $A_{1,2}$, $B_{1,2}$, $N_{1,2}^{c}$, $\omega_{A,B}$ and $\delta_{A,B}$.  While this ansatz is a straightforward generalisation of the CLM ansatz \eqref{eq:1C}, with a second frequency $\omega_B$, we stress that in contrast to the CLM ansatz, equations~\eqref{eq:2C} fulfil the original system~\eqref{eq:mLK} only approximately.   The presence of two frequencies give rise to oscillations in the intensities of the electric fields of the form 
\begin{equation}\label{eq:l2norm}
\left\vert E_{1}\left(t\right) \right\vert^2 = A_{1}^2 + B_{1}^2 + 2 A_{1} B_{1} \cos\left(\left(\omega_A -\omega_B\right) t\right)
\end{equation}
and similarly for $\left|E_2\right|^2$.   According to \eqref{eq:mLK} this then also leads to oscillations in the population inversions, as we have seen in the upper left panel of Fig.~\ref{fg:2CSB}, and thereby contradicts the assumptions of constant $N_{1,2}(t)= N_{1,2}^c$.  However, as the parameter $T$ is large, ansatz  \eqref{eq:2C} is often well justified in practice, in particular if the beating frequency $\omega_A - \omega_B$ is also large.  In the same way as for CLMs, the ansatz~\eqref{eq:2C} can be split into symmetric ($A_1=A_2$, $B_1=B_2$) and symmetry-broken solutions. 

In the case of symmetric two-colour states, the frequencies  $\omega_A$ and $\omega_B$ correspond to the frequencies of in-phase and anti-phase CLMs and are obtained from \eqref{eq:freqs}.  For symmetry-broken states the analytical calculation of  $\omega_A$ and $\omega_B$ is shown in Appendix~\ref{ap:omegaA_omegaB}.

\section{Reduced Coordinate System}\label{sec:coord}

Our aim is to understand the bifurcation structure associated with the various one-colour states and two-colour states presented in the previous sections.  As we are mostly interested in the closely coupled limit, we now formally set $\tau=0$.  System \eqref{eq:mLK} then becomes a six dimensional system of ordinary differential equations.  

The system still possesses the  $S^1$ symmetry~\eqref{eq:symm} which can be exploited in order to reduce the number of dimensions of the system.  One popular way of achieving this is by rewriting the system \eqref{eq:mLK} with the dynamical variables $\left(\left| E_{1} \right|,\left| E_{2} \right|,\phi_D, \phi_A, N_1, N_2 \right)$ using $E_{1,2}=\left| E_{1,2} \right| e^{\phi_A\mp \phi_D/_2}$.  Then the dynamical variable for the absolute phase $\phi_A$ decouples from the rest of the system and we are left with a five dimensional system. 

\begin{figure}
\includegraphics[width=0.6\linewidth]{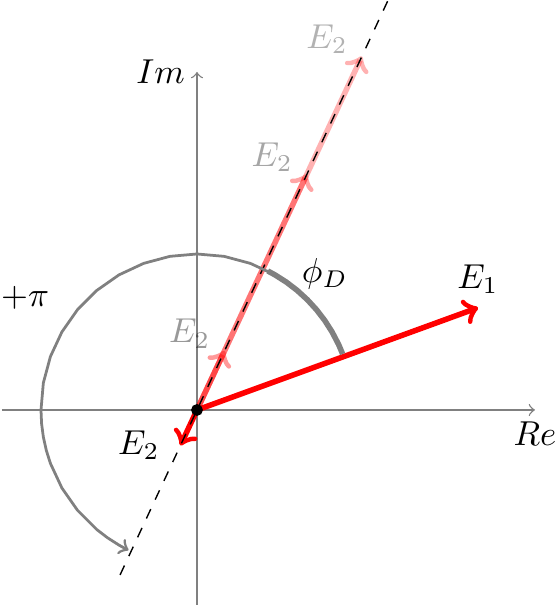}
\caption{\label{fg:singularity} Demonstration of the discontinuity of $\phi_D$ at the origin. A small change in the electric field $E_1$ of the first laser can lead to a large change in the polar coordinate $\phi_D$ to $\phi_D+\pi$.}
\end{figure}

Although widely used, this approach is problematic, because the dynamical variable for the phase difference between the electric fields $\phi_D$ is not well defined if either $E_1$ or $E_2$ vanishes. As a consequence, $\phi_D$  can jump discontinuously as one of the electric fields goes through the origin.  This is schematically demonstrated in Fig.~\ref{fg:singularity}.  To see this mathematically  consider the differential equation for $\phi_{ D}$ which is given by
\begin{equation}
\begin{aligned}
\dot{\phi}_{ D} = &\alpha \left(N_2 - N_1\right)  \\
& + \kappa \left[ \frac{\left|E_2\right|}{ \left|E_1\right| } \sin({ C_p} { -}  \phi_{ D} ) - \frac{\left|E_1\right| }{ \left|E_2\right| } \sin({ C_p} { +}  \phi_{ D} ) \right] \label{eq:phidot} \nonumber     
  \end{aligned}
\end{equation}
The discontinuity in $\phi_D$ manifests itself in the form of singularities at $\left|E_{1/2}\right|=0$, which make it difficult to use numeric continuation software to explore the dynamical features of the system.

In order to avoid these singularities, we introduce a five dimensional coordinate system $(q_x,q_y,q_z,N_1,N_2)$, where the variables are defined via 
\begin{subequations}\label{eq:q}
\begin{eqnarray}
q_x + i q_y &=& 2 E_1^* E_2 \label{eq:qxy}\\
q_z &=& \vert E_1 \vert^2 - \vert E_2 \vert^2 
\end{eqnarray}
\end{subequations}
The multiplication by 2 in \eqref{eq:qxy} ensures that the Euclidean length of the q-vector equals the total intensity output of both lasers, ${R}=\left( q_x^2 + q_y^2 + q_z^2 \right)^\frac{1}{2} = \vert E_1 \vert^2 + \vert E_2 \vert^2$.  The coordinates $q_x, q_y, q_y$ are analogous to the usual Poincar{\'e} sphere representation of polarised light, and per definition do not depend on the absolute phase $\phi_A$.   Therefore the new variables are invariant under the $S^1$ symmetry of the original system \eqref{eq:symm}.  The $\field{Z}_2$ symmetry  now operates as follows
\begin{equation}
\label{eq:symm_new}
 \left(q_x,q_y,q_z,N_1,N_2\right) \rightarrow \left(q_x,-q_y,-q_z,N_2,N_1\right) 
\end{equation}

Rewriting the system \eqref{eq:mLK} for $\tau=0$  in terms of the new coordinates~\eqref{eq:q} yields the five dimensional system
\begin{equation}\label{eq:qdot}
\begin{aligned}
\dot{q_x} &= q_x \left( N_1 + N_2 \right) + \alpha q_y \left( N_1 - N_2 \right) + 2 {\kappa} {R} \cos({C_p}) \\
\dot{q_y} &= q_y \left( N_1 + N_2 \right) - \alpha q_x \left( N_1 - N_2 \right) - 2 {\kappa} q_z \sin({C_p}) \\
\dot{q_z} &= q_z \left( N_1 + N_2 \right) + {R} \left( N_1 - N_2 \right) + 2 {\kappa} q_y \sin(C_p) \\
T \dot{N_1} &= P - N_1 - \left( 1 + 2N_1 \right) \, \left({R + q_z}\right)/2  \\
T \dot{N_2} &= P - N_2 - \left( 1 + 2N_2 \right) \, \left({R - q_z}\right)/2.
\end{aligned}
\end{equation}

As intended, there are now no singularities in the dynamical variables, and the dynamical equations are invariant under the $\field{Z}_2$ symmetry operation \eqref{eq:symm_new}.  It is this reduced system that will be used for bifurcation analysis in the next section.

\section{Bifurcation Diagram}\label{sec:map}

\begin{figure}
\includegraphics[width=\linewidth]{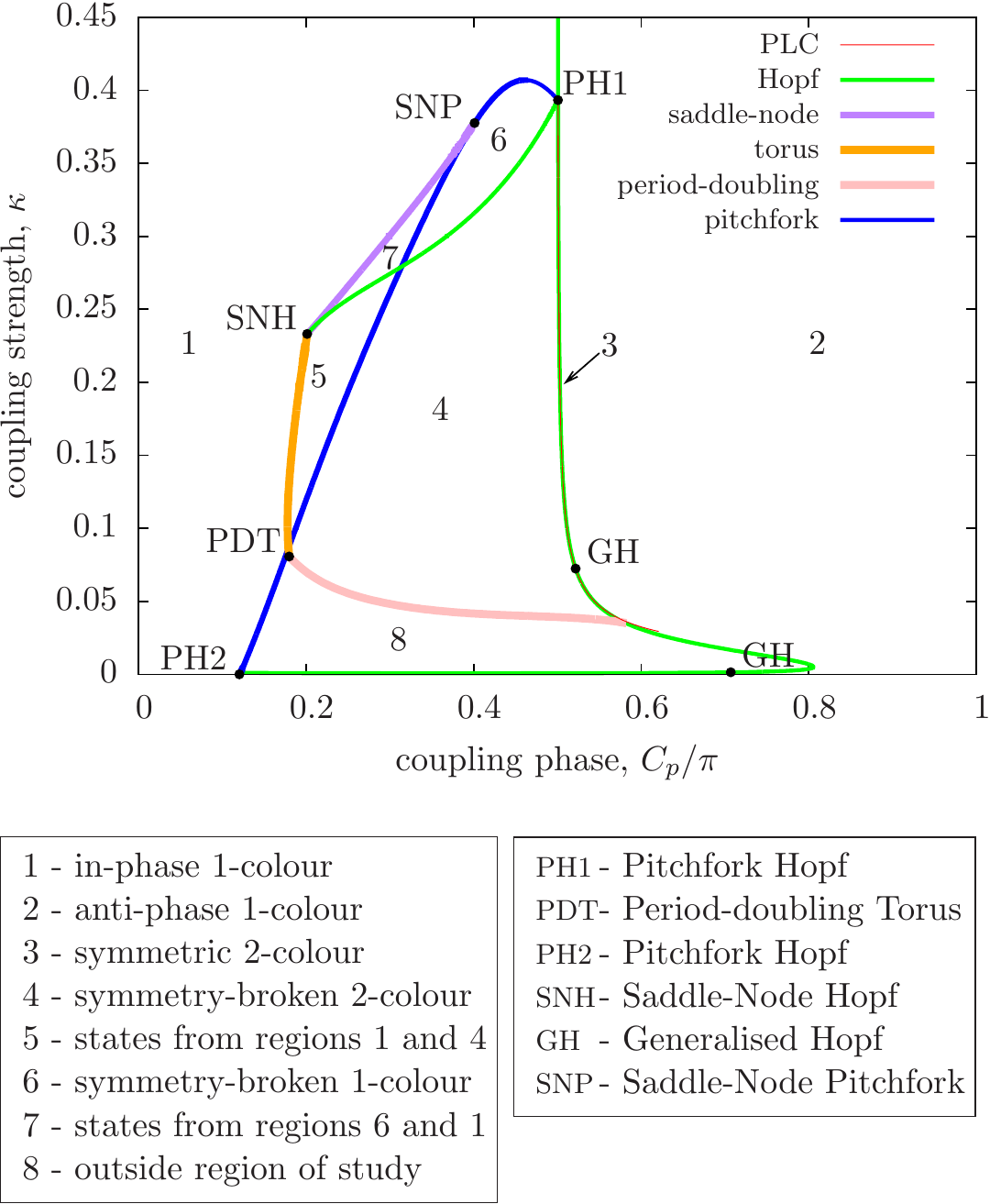}

\caption{\label{fg:map} Bifurcation diagram of system \eqref{eq:qdot} in the $(C_p,\kappa)$ parameter plane. The bifurcation lines separate parameter space into eight distinct dynamical regions.  The codimension one bifurcation lines are organised by a set of codimension two points.  The labelling used for the dynamical regions and for the codimension two points are explained in the left and right legends below the main graph.}
\end{figure}

In this section the bifurcation structure at $\tau =0$ is explored using the 5 dimensional model \eqref{eq:qdot}. In the reduced coordinate system \eqref{eq:q}, one-colour states become fixed point solutions and the two-colour states become limit cycles.  This reduction in complexity allows us use continuation software AUTO \cite{DOE06} to obtain a comprehensive overview of the involved bifurcations. 

In Fig.~\ref{fg:map}, the bifurcation diagram with the two bifurcation parameters of coupling phase $C_p$ and coupling strength $\kappa$ is presented. Due to the parameter symmetry
\begin{equation}\label{eq:psymm}
\left( q_x, q_y, {C_p} \right) \rightarrow \left( -q_x, -q_y, {C_p}+\pi \right) 
\end{equation}
it is sufficient to consider the parameter range of $C_p$ in the interval $[0,1\pi)$ only.  The (codimension one) bifurcation lines in the diagram separate the parameter space into eight distinct regions.  Only bifurcations which affect stable dynamical states are plotted.  In region 1 and 2 in-phase one-colour states and anti-phase one-colour states are stable respectively. For the new coordinates \eqref{eq:q}, symmetric one colour states are confined to  the $q_x$ axis ($q_y=0; q_z=0$), with $q_x>0$ in the in-phase case. 

At very high coupling strengths ($\kappa>PH1$), the in-phase region 1 and the anti-phase region 2 are separated by two supercritical Hopf bifurcations in close proximity as shown in the lower right panel of Fig.~\ref{fg:cuts}. Between these two Hopf bifurcations a stable limit cycle of low amplitude and high frequency exists which corresponds to the symmetric two-colour dynamics seen in Fig.~\ref{fg:BS} of Sect.~\ref{sec:2C}, where the torus bifurcations of the original system \eqref{eq:mLK} have become the Hopf bifurcations in the reduced system. In Fig.~\ref{fg:map} these two vertical Hopf bifurcation lines appear as a single line at the scale of the diagram with the tiny region 3 of symmetric two-colour states nestled between them. In addition symmetry-broken one-colour states ($q_y\neq 0; q_z \neq 0$) are stable in region 6 which is accessible from the symmetric one-colour states via a supercritical bifurcation at the top of the blue pitchfork line between the SNP and PH1 points. 

\begin{figure}
\includegraphics[width=1.0\linewidth]{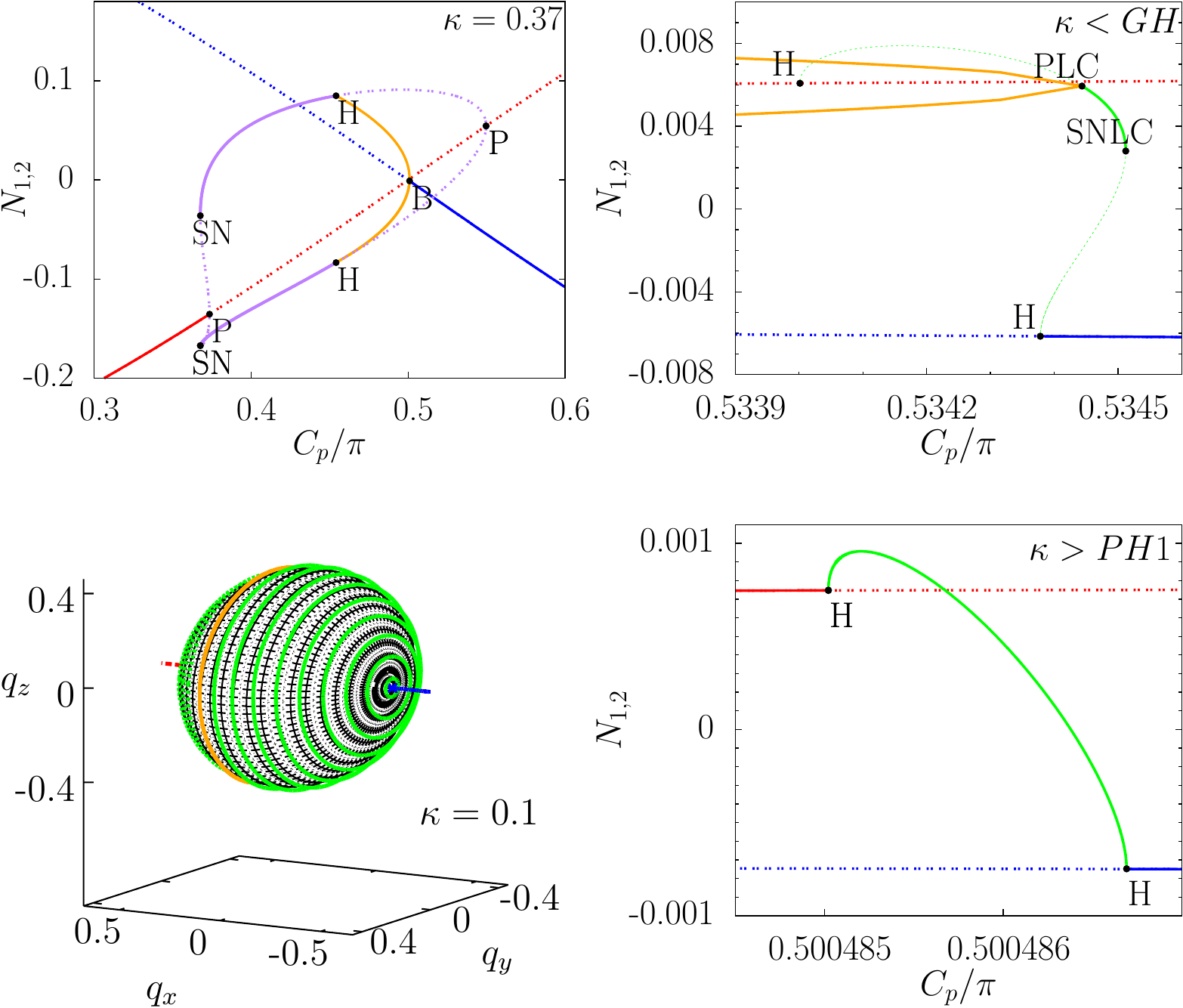}
\caption{\label{fg:cuts} Several cuts of constant $\kappa$ as indicated across Fig.~\ref{fg:map} are shown. The top panels and lower-left panel show bifurcation diagrams of maximum $N_{1,2}$ versus $C_p$. Solid lines indicate stable states and dashed lines unstable states. Symmetric in-phase and anti-phase one-colour states are red and blue respectively. Symmetry-broken one-colour states appear as purple lines whilst symmetry-broken limit cycles are orange. Green lines indicate symmetric limit cycles. Hopf, pitchfork, pitchfork of limit cycles, saddle-node, saddle-node of limit cycles are denoted by H, P, PLC, SN, SNLC. Point B contains several bifurcations, see main text. The lower-left panel shows a 3-dimensional graph in q-vector space~\eqref{eq:q} showing symmetric limit cycles (green) between in-phase (red) and anti-phase (blue) one-colour states. The orange ring marks the pitchfork of limit cycles. The $C_p$ range corresponds to the same as right-most inset of Fig.~\ref{fg:cut}. 
Symmetry-broken 
limit cycles 
not drawn.}
\end{figure}

\begin{figure}
\includegraphics[width=0.9\linewidth]{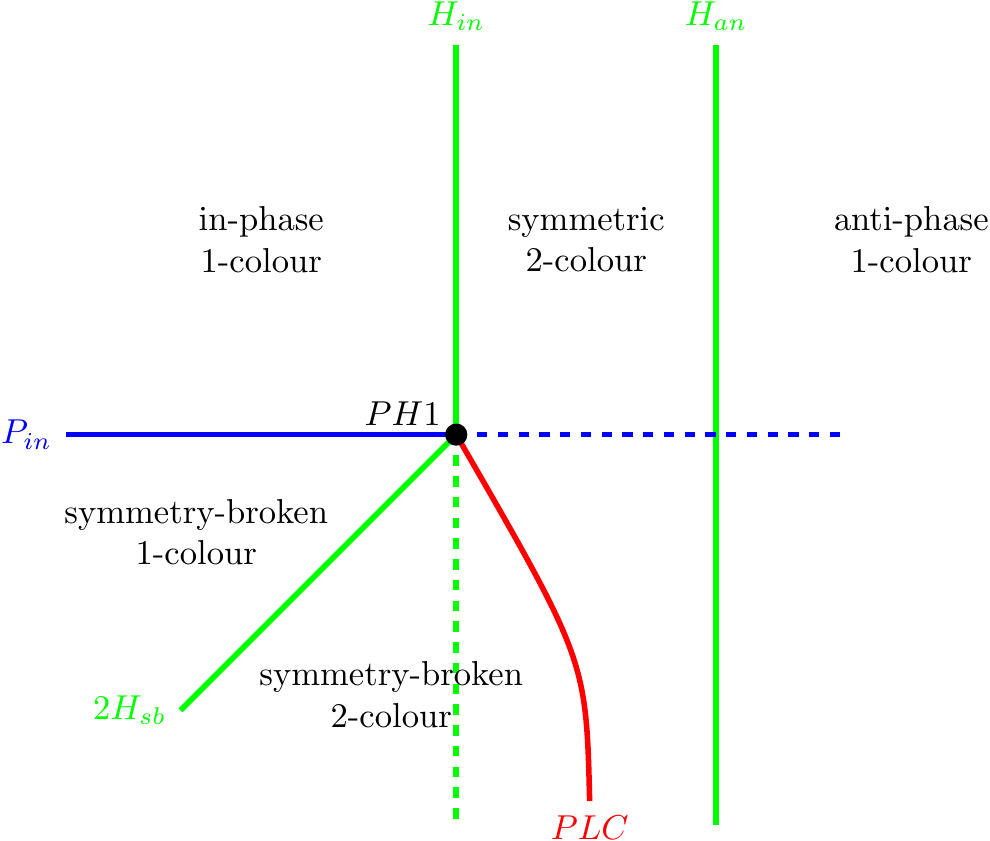}
\caption{\label{fg:blowup} Sketch of bifurcation structure in vicinity of the pitchfork-Hopf codimension two point, labelled PH1 in Fig.~\ref{fg:map}. Subscripts ${in}$, ${an}$ and ${sb}$ refer to bifurcations acting on in-phase, anti-phase and symmetry-broken states respectively.  The solid bifurcation lines affect stable states, while the dashed lines only affect unstable states.  }
\end{figure}

\begin{figure}
\includegraphics[width=1.0\linewidth]{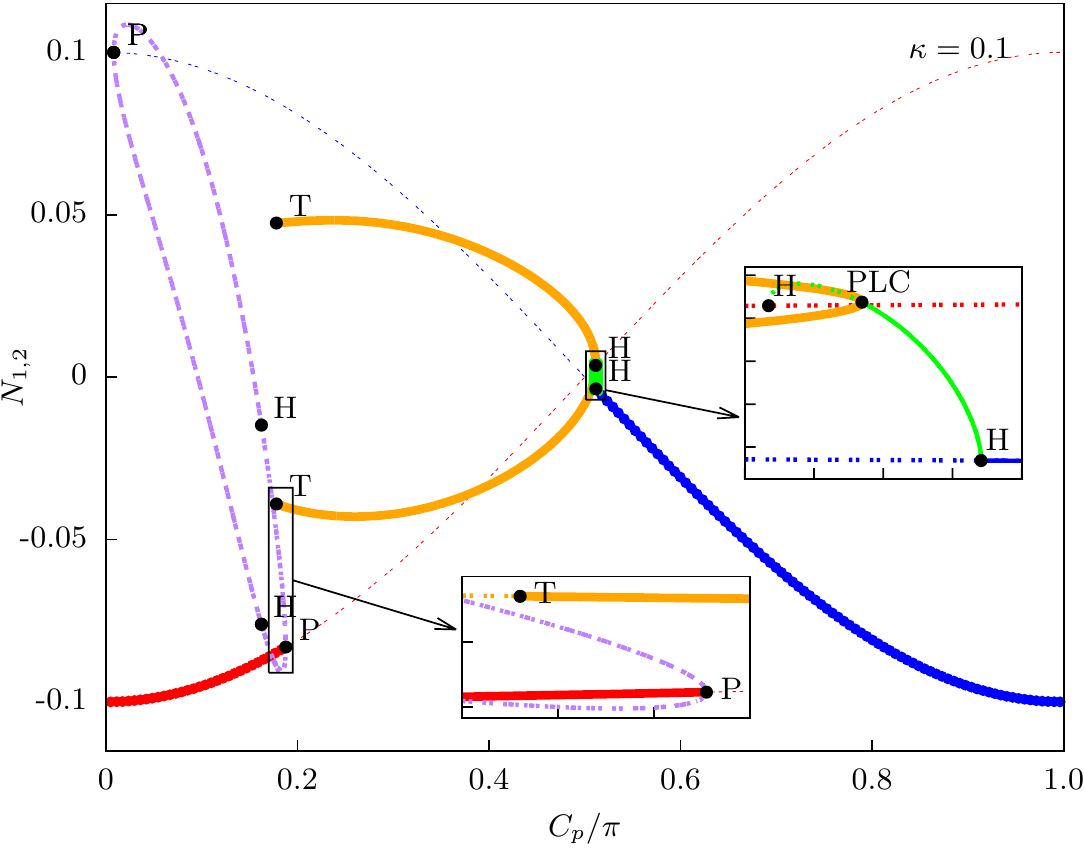}
\caption{\label{fg:cut} An inversion cut at $\kappa=0.1$ across the entire $1\pi$ range of $C_p$. Colours and labels are as in Fig.~\ref{fg:cuts}. T represents a torus bifurcation. Insets are blow-ups of the regions indicated.}
\end{figure}

A sketch of the situation in the vicinity of the PH1 point is provided in the left part of Fig.~\ref{fg:blowup}, where one of these supercritical Hopf bifurcations and the supercritical pitchfork bifurcation intersect. The second supercritical Hopf bifurcation although nearby does not play a role. The PH1 point is a pitchfork Hopf codimension two bifurcation and analytical expressions are obtained in Appendix~\ref{ap:location-ph1-point}. In order to classify this bifurcation, we note that the pitchfork Hopf and the Hopf-Hopf bifurcation share the same reduced normal form. PH1 corresponds to a Hopf-Hopf ``simple'' case III of~\cite{KUZ04} or equivalently case II/III of~\cite{GUC83}. Figure~\ref{fg:blowup} illustrates that this pitchfork Hopf point is responsible for the symmetry-broken two colour dynamics which originates between a pitchfork of limit cycles (PLC) and a supercritical Hopf bifurcation ($2H_{sb}$). Symmetry-broken two-colour states are stable in region 4 of Fig.~\ref{fg:map}. 

The point labelled SNP is a saddle-node pitchfork codimension two bifurcation. It shares the same reduced normal form as a generalised (Bautin) Hopf bifurcation without rotation. At this point the supercritical pitchfork bifurcation becomes subcritical, thus spawning the tri-stable region 7, where both symmetry-broken and symmetric one-colour states are stable. An integration with noise of the original system~\eqref{eq:mLK} showing the three stable one-colour states was presented in Fig.~\ref{fg:CLMs} of section~\ref{sec:CLM}. 

The third corner of symmetry-broken one-colour states is closed by a saddle-node Hopf (also known as fold Hopf) codimension two bifurcation labelled SNH in  Fig.~\ref{fg:map}. At this point a saddle-node and a Hopf bifurcation meet tangentially and also a torus bifurcation line emerges from this point.  The unfolding of saddle-node Hopf bifurcations have been studied extensively in the literature, and we identify the SNH point with the third case in~\cite{KUZ04}\footnote{\cite{KUZ04} does not enumerate the cases. Here we refer to the case of Figure 8.16 in the 3rd version of the book.}. These three codimension two bifurcations organise the multi-stabilities.

Next we provide a series of cuts of constant coupling strength $\kappa$ across the diagram (Fig.\ref{fg:map}) to further explain the dynamics in each of the regions. In Fig.~\ref{fg:cut}, a plot at $\kappa=0.1$ of the inversions $N_{1,2}$ across the entire $1\pi$ span of the coupling phase $C_p$ is presented.  This plot illustrates that symmetry-broken two-colour states are stable over a large range of $C_p$. The bottom inset highlights region 5 which is relatively small at $\kappa=0.1$. In this region a symmetric one-colour is stable in addition to the symmetry-broken two-colour states. In the right-most inset a blow-up of region 3 is provided. The limit cycle born after a supercritical Hopf bifurcation is still invariant under $\field{Z}_2$ symmetry  \eqref{eq:symm_new} and therefore obeys 
\begin{equation}
\label{eq:symm_lc}
 \left(q_x,q_y,q_z,N_1,N_2\right)\left(t\right) = \left(q_x,-q_y,-q_z,N_2,N_1\right)\left(t+\frac{\tau_l}{2}\right),
\end{equation}
where $\tau_l$ is the period of the limit cycle.  In a projection to the $(q_x,q_y,q_z)$ coordinates, the limit cycles therefore form rings around the $q_x$ axis, as shown in the lower left panel of Fig.~\ref{fg:cuts}. These limit cycles corresponds to the symmetric two-colour states. They lose stability at a pitchfork of limit cycles bifurcation and create the large regions 4 and 5 of symmetry-broken two-colour states. In the bifurcation diagram (Fig.~\ref{fg:map}) the GH point is a codimension two generalised Hopf (Bautin) bifurcation. As shown in top right panel of Fig.~\ref{fg:cuts},  the scenario remains very similar for $\kappa$ below this point. However here the supercritical Hopf bifurcation becomes subcritical and a saddle-node of limit cycles bifurcation occurs.  Finally in the top left of Fig.~\ref{fg:cuts}, we look at an inversion cut of much higher $\kappa$ just beneath the SNP 
point. Here the supercritical Hopf and supercritical pitchfork of limit cycles bifurcations occur in close proximity at the point B. In contrast to Fig.~\ref{fg:cut}, symmetry-broken one-colour states are stable limited on the left by a saddle-node bifurcation and on the right by a supercritical Hopf bifurcation. Region 7 of Fig.~\ref{fg:map} corresponds the small area of hysteresis between these symmetry-broken one-colour and symmetric one-colour states. 

The remaining two codimension two points plotted in Fig.~\ref{fg:map} are a period-doubling torus (PDT) and a second pitchfork Hopf (PH2) bifurcation. PH2 gives rise to a tiny triangular region of bi-stability between symmetric one-colour states which occurs below the green horizontal Hopf as shown in~\cite{YAN04}. We identify this point as being equivalent to the Hopf-Hopf ``difficult'' case VI of~\cite{KUZ04} which is the same as case VIa of~\cite{GUC83}. This completes the bifurcation picture for moderate to high coupling strength. We note that different dynamics occurs for lower coupling strengths, in particular region 8 of Fig.~\ref{fg:map} which is outside the scope of the current paper.

\section{All-Optical Switching}\label{sec:opticalswitch}

The realisation of stable, fast and scaleable all optical memory elements would significantly increase the scope of photonic devices and has therefore attracted considerable interest in the laser community. Examples of optical memory designs include hetero-structure photonic crystal lasers~\cite{CHE11}, coupled micro-ring lasers~\cite{HIL04} and dual mode semiconductor lasers with delayed feedback~\cite{BRA12}. In this section, we propose a conceptually {\em simple} all optical memory element on the basis of two closely coupled single-mode lasers.

\begin{figure}
\includegraphics[width=1.0\linewidth,type=pdf,ext=.pdf,read=.pdf]{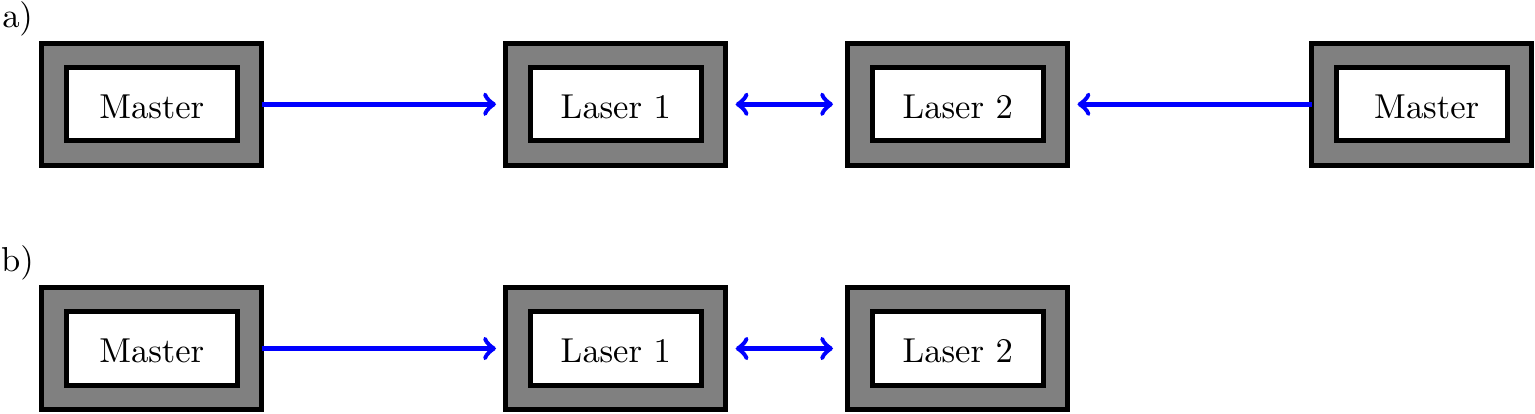}
\caption{\label{fg:scheme2} Sketch showing the two laser coupling schemes considered as optical switches.}
\end{figure}

Memory units require at least two stable states and a mechanism to switch between them. In the previous sections, a number of different parameter regions displaying multi-stabilities were discovered for closely coupled lasers. In particular, in section~\ref{sec:2C} corresponding to region 4 of Fig.~\ref{fg:map} we established the presence of symmetry-broken two-colour states. Switching between them is achieved via optical injection from two master lasers as shown schematically in Fig.~\ref{fg:scheme2}(a). Initially the lasers are in a symmetry-broken state with an optical spectrum displayed in the top left panel of Fig.~\ref{fg:switch_2c}. The oscillations in the magnitude of the electric fields, shown in the graph beneath, are due to the beatings between the two optical frequencies (Sec.~\ref{sec:2C}). Between 5~ns and 6~ns an optical pulse from a master laser is injected into laser 1 which causes laser 1 to lock to the external frequency and for its carrier density to significantly reduce. This is shown in 
the top panel of Fig~\ref{fg:os1speed_N}, where we denote the point ``A'' at approximately $50$~ps  when the carrier density of laser 1 is pushed beneath that of laser 2.  After the pulse is switched off at 6~ns, a transient of about 3~ns is visible in the bottom panel of Fig.~\ref{fg:switch_2c} for the coupled lasers to completely settle to the twin symmetry-broken two-colour state which corresponds to the two lasers having been exchanged (top centre panel of Fig.~\ref{fg:switch_2c}).  The contrast ratio between the two stable states is relatively large.  For example, the intensity of the colour with higher frequency changes by a factor of more than four. 

\begin{figure}
\includegraphics[width=1.0\linewidth,type=pdf,ext=.pdf,read=.pdf]{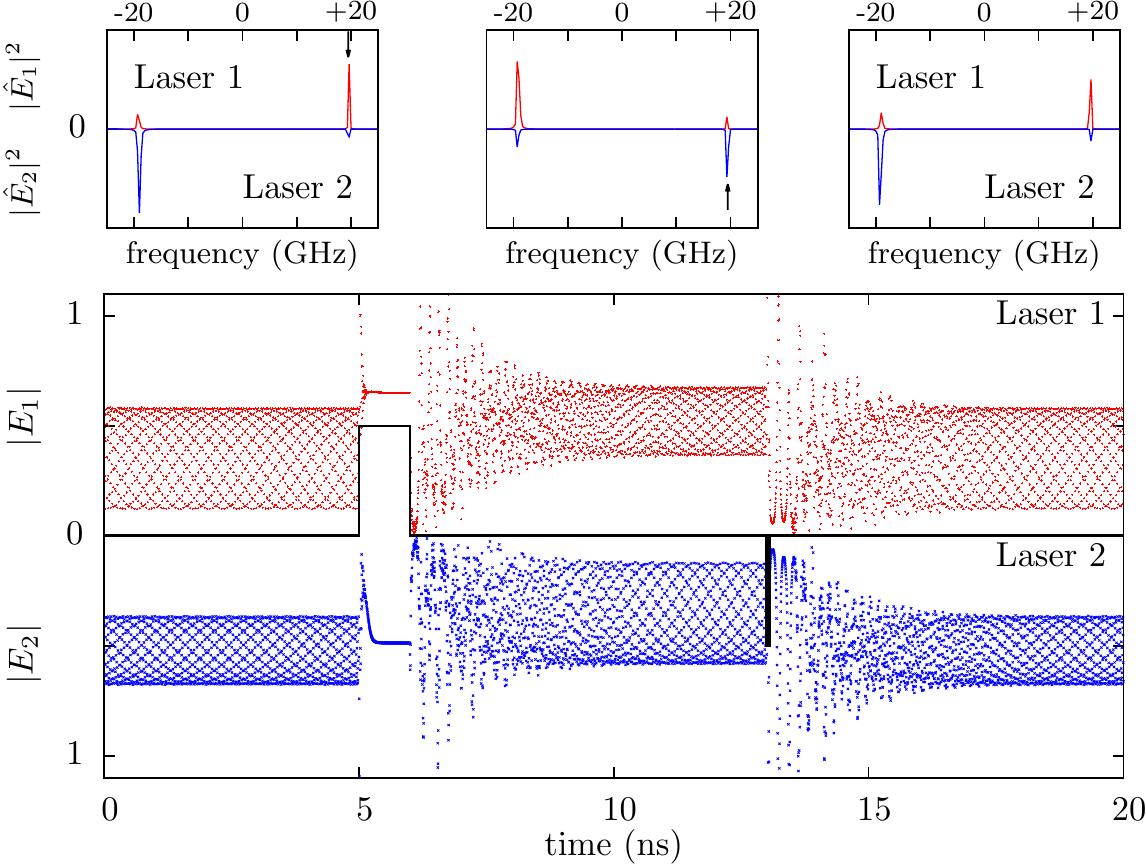}
\caption{\label{fg:switch_2c} Optical switching between symmetry-broken two-colour states for a two laser configuration as outlined in Fig.~\ref{fg:scheme2}(a) for parameters $\tau=0.2$, $\kappa=0.1$, $C_p=0.35\pi$. The large bottom panel shows the magnitude of the electric field in both lasers. Laser 1 in red and laser 2 mirrored underneath in blue. The central black line dividing the two shows the injection strength and duration of the master lasers. At 5~ns a pulse of 1~ns is injected from a master laser into laser 1. At 13~ns a pulse of 50~ps from a master laser is injected into laser 2. The top three panels show the corresponding frequency states. The black arrow indicates the frequency of injected light which is 19.5~GHz larger than the lasers' free running frequency.}
\end{figure}

One important aspect from an application point of view is the ability to switch between two states with a very short external optical pulse and we therefore define the \emph{write time} as the minimum pulse duration to ensure switching.  In order to demonstrate that a shorter pulse is sufficient to trigger the switch, we inject at 13~ns a pulse of 50ps duration into laser 2.  In the lower panel Fig~\ref{fg:os1speed_N}, we see that $N_2$ is indeed pushed below $N_1$ during the pulse. After the pulse is switched off, both  $N_1$ and $N_2$ oscillate strongly but $N_2$ remains consistently below $N_1$. Therefore the write time is determined by the minimum time needed to reduce the carrier density of one laser below that of the other laser which for our setup is approximately 50~ps. We have thus demonstrated a basic mechanism for an all-optical memory element with a large contrast ratio, a short write time and a low coupling strength between the lasers. 

\begin{figure}
\includegraphics[width=1.0\linewidth,type=pdf,ext=.pdf,read=.pdf]{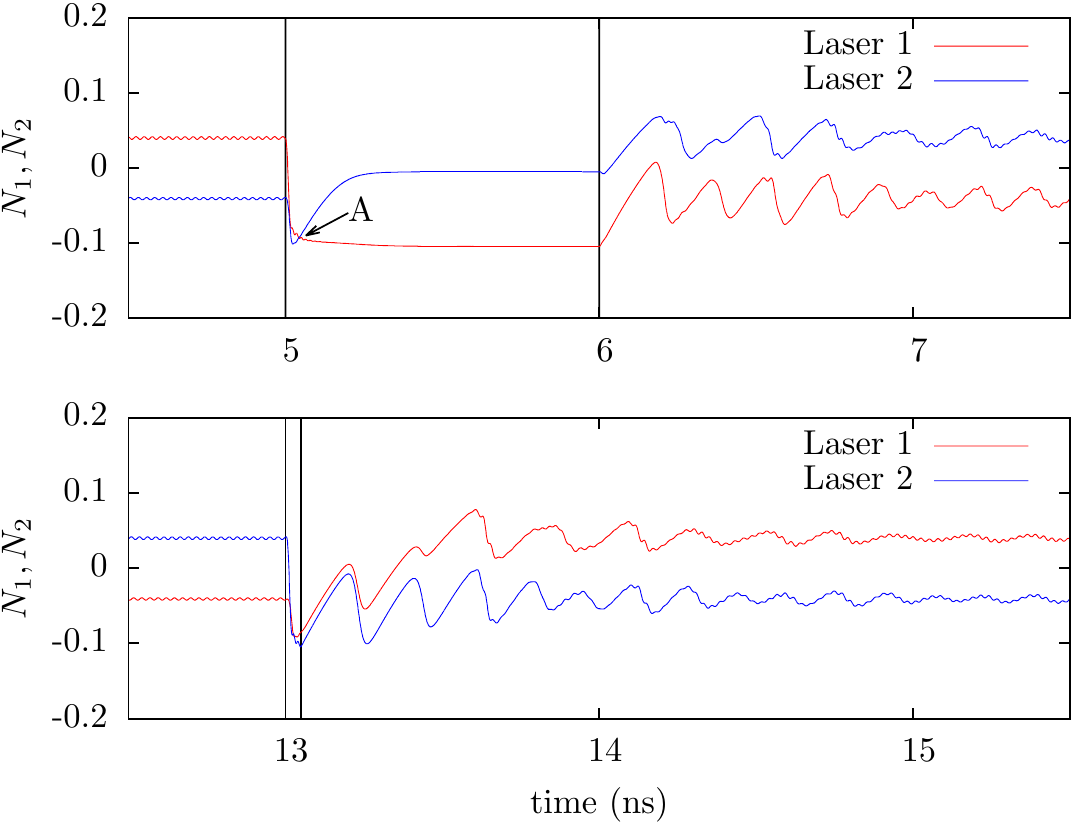}
\caption{\label{fg:os1speed_N} Diagrams showing the dynamics of the carrier densities for both lasers during the first switching event (top panel) and second switching event (bottom panel) of Fig.~\ref{fg:switch_2c}. Black vertical lines mark the time when the external injection is turned on and then off. Point labelled ``A'' indicates where the carrier density of laser~1 becomes less than laser~2.}
\end{figure}

For certain applications it may be desirable to inject into only one of the two lasers as in Fig.~\ref{fg:scheme2}(b). To achieve this, we choose parameters consistent with region 5 of Fig.~\ref{fg:map}. In this region, in addition to the two-colour symmetry-broken states of the previous paragraph, a one-colour symmetric state is also stable. In Fig.~\ref{fg:switch_1c2c}, these states form the basis for the optical memory unit. Initially the two lasers start in a degenerate state with the same amplitude and the same single frequency in both lasers. Using a pulse with a positive detuning relative to the central frequency ($+83$~GHz in the case of Fig.~\ref{fg:switch_1c2c}) the symmetric state of the two lasers symmetry breaks to a two-colour state. As the number, position and intensities of frequencies change, the two states can be easily distinguished which is desirable from an application point of view. 

\begin{figure}
\includegraphics[width=1.0\linewidth,type=pdf,ext=.pdf,read=.pdf]{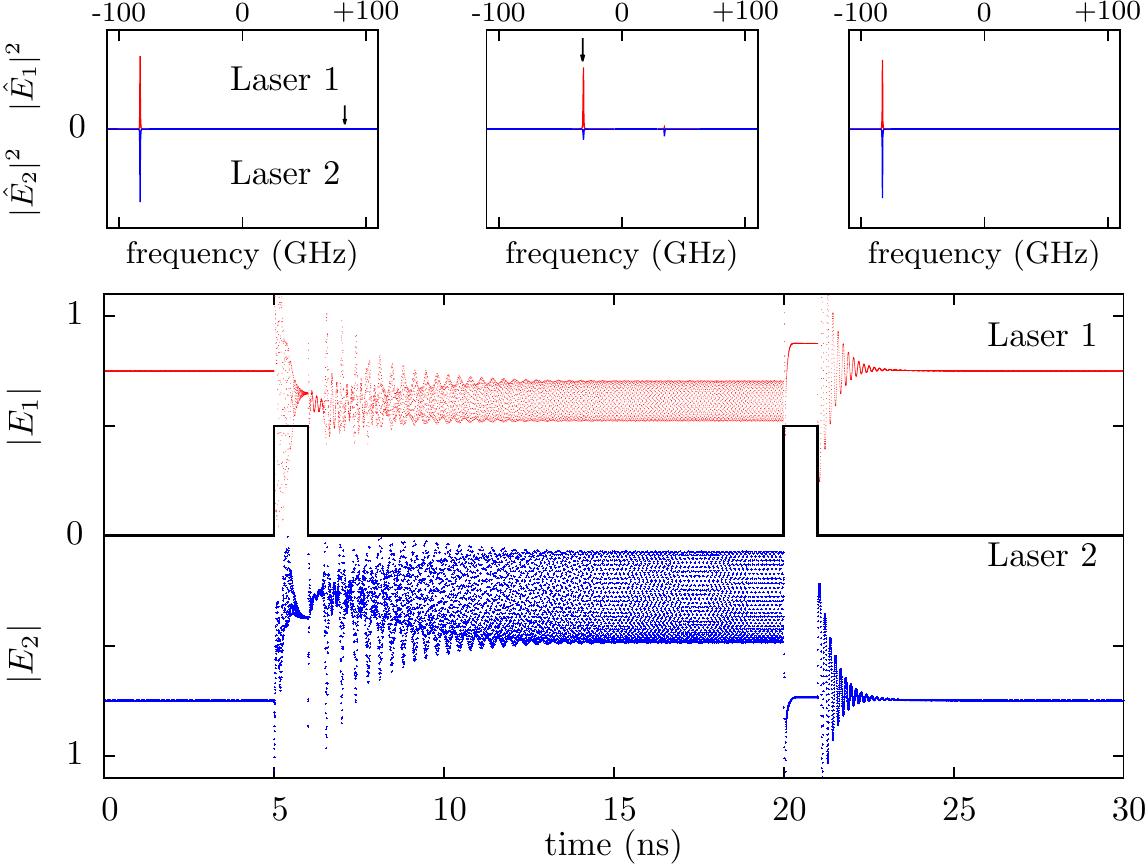}
\caption{\label{fg:switch_1c2c} Optical switching between a symmetric one-colour state and a symmetry-broken two-colour state as sketched in Fig.~\ref{fg:scheme2}(b) for parameters $\tau=0.2$, $\kappa=0.2$, $C_p=0.25\pi$. Panel layout as in Fig.~\ref{fg:switch_2c}. At 5~ns a pulse of 1~ns with frequency offset of +83~GHz is injected from the master laser into laser 1. At 20~ns a pulse of 1~ns with frequency offset of -32~GHz is injected from the master laser into laser 1.}
\end{figure}

To discuss how quickly the two states can be ascertained, we introduce the \emph{read time} which is the minimum duration needed to differentiate optically between the two states after the injection has turned off. In Fig.~\ref{fg:os1speed_N} for the first optical switch after the external injection was removed at 6~ns and 13.05~ns, large amplitude oscillations in the carrier densities at a frequency consistent with the relaxation oscillations for the coupled system (5.7~GHz) ensue. These oscillations are directly related to the length of transient observed in the optical fields for the system to completely settle to the twin symmetry-broken two-colour state. We stress that one does not need to wait for all relaxation oscillations in the system to die out before reading. Small amplitude high frequency oscillations (40~GHz) which are due to the beatings between the two optical colours (see Sec.~\ref{sec:2C}) are observable before each switching event and are discernible in Fig.~\ref{fg:os1speed_N} as small 
deformities in the larger amplitude oscillations within $1$~ns of the external injection being turned off. In Fig.~\ref{fg:os2_freq}, two Fourier modes consistent with the peak frequencies of $-32$~GHz and $-83$~GHz are traced out for each switching event for the second optical switch. All other frequencies are filtered out. During the first injection episode from $5$~ns to $6$~ns, the frequency centred at $-83$~GHz is turned off within $400$~ps. We also observe that after injection, the frequency at $-32$~GHz has reached a stable intensity within 100~ps. To switch back to the symmetric one-colour state, a pulse with negative detuning (-38~GHz in the case of Fig.~\ref{fg:switch_1c2c}) is injected into laser 1. In the lower panel of Fig.~\ref{fg:os2_freq} the frequency component at $-83$~GHz relaxes to its equilibrium value within 1.5~ns. The frequency at $-34$~GHz is completely off within $100$~ps. This constitutes a robust memory element where we inject into one laser only. An external injection pulse with 
positive detuning causes the coupled lasers to enter a symmetry-broken state whilst negative detuning causes the coupled lasers to enter a symmetric state. Compared to the switching scenario in the previous paragraph (Fig.~\ref{fg:switch_2c}) the two lasers are more strongly coupled but the contrast ratio between the two states is greater.

\begin{figure}
\includegraphics[width=1.0\linewidth,type=pdf,ext=.pdf,read=.pdf]{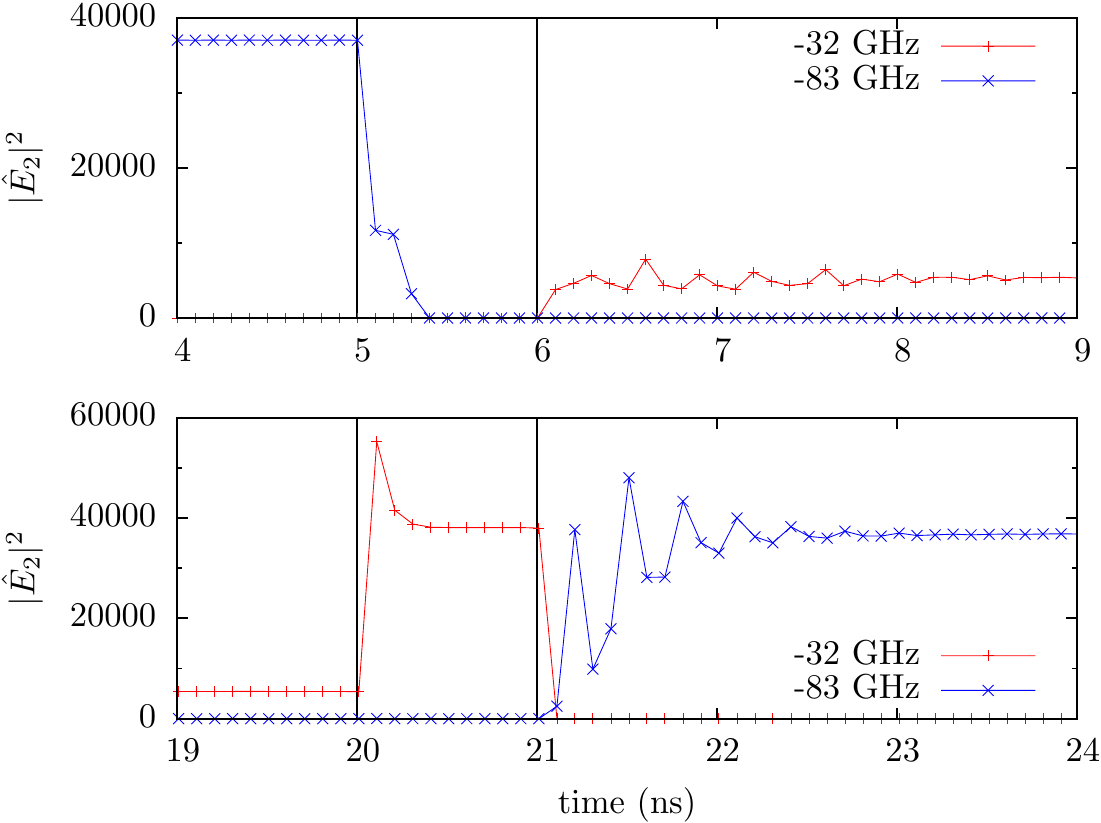}
\caption{\label{fg:os2_freq} Intensity plot tracing frequencies $-32$~GHz and $-83$~GHz for the uninjected laser 2 during the first switching event (top panel) and second switching event (bottom panel) of Fig.~\ref{fg:switch_1c2c}. A Fourier transform using a Hann window is executed every $100$~ps.}
\end{figure}

In this section we have shown that two closely coupled identical lasers can operate as an optical memory element and have provided two specific examples. The high degree of multi-stability discovered in the previous sections enables many other designs. As the multi-stabilities persist in the limit of $\tau\rightarrow0$, the distance between the lasers can be reduced as far as is technologically possible. Memory units of this kind are therefore open to miniaturisation and allow integrability. For the optical switch of Fig.~\ref{fg:switch_2c}, a fast write time of $50$~ps was demonstrated in the second switching event and was discussed with reference to the carrier density in Fig.~\ref{fg:os1speed_N}. A significant speed increase on this number may be possible by increasing the injection strength of the external pulse or a fine tuning of the coupled lasers' parameters. For the optical switch of Fig.~\ref{fg:switch_1c2c}, we showed via Fig.~\ref{fg:os2_freq} that it is possible to distinguish optically between 
the states  within a read time of $100$~ps after the external injection was turned off. Again significant improvements on this number may be possible. Indeed the larger the frequency separation between states the shorter the time needed to differentiate between them. Therefore choosing parameters consistent with larger frequency separations which normally occur at higher coupling strength may substantially decrease the read time. We conclude that closely coupled lasers offer a promising approach for the realisation of scaleable and fast all optical memory elements.



\section{Conclusion}

In this paper, a comprehensive overview of the bifurcation scenarios in a system of two closely coupled single-mode lasers is provided.  For moderate to high coupling strength the four characteristic stable states are symmetric one-colour, symmetry-broken one-colour, symmetric two-colour and symmetry-broken two-colour states.  We introduce a new coordinate representation which accounts for the $S^1$ symmetry in the system without creating unnecessary singularities. This allows us to study the bifurcation structure of this system using conventional numerical continuation techniques.  

Our results show that the bifurcations between the various stable states are organised by a number of codimension two bifurcation points which are identified with reference to the literature.  In particular it is found that the interplay between a pitchfork Hopf, a saddle-node Hopf and a saddle-node pitchfork codimension two points give rise to regions of multi-stabilities.  Detailed knowledge of the bifurcations and their boundaries of closely coupled lasers may open several technological applications. We propose two all-optical switch candidates. 

\begin{acknowledgments}
This work was supported by Science Foundation Ireland under grant number 09/SIRG/I1615. We thank P. Harnedy and E. P. O'Reilly for supporting discussions. 
\end{acknowledgments}

\appendix
\section{Symmetric and Symmetry Broken}\label{ap:ssbCLM}

CLMs or one-colour states are defined in the main text by the ansatz~\eqref{eq:1C} for the delayed coupled system of ODEs~\eqref{eq:mLK}. The following equivalence relation characterise symmetric CLMs whilst symmetric-broken CLMs are the complement class. 

\begin{theorem}\label{thm:sCLM}
The following three conditions are equivalent.
\begin{eqnarray}
(i) \quad &N_1 &= N_2  \nonumber \\
(ii) \quad &A_1 &= A_2 \nonumber \\
(iii) \quad &\delta_A &= 0,\pi  \nonumber
\end{eqnarray}
\end{theorem}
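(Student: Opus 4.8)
The plan is to substitute the one-colour ansatz~\eqref{eq:1C} into the rate equations~\eqref{eq:mLK}, reduce to a handful of algebraic ``phasor'' equations, and read the three equivalences off from their structure. Writing $\Phi := C_p + \omega_A\tau$ and $z_j := i\omega_A - (1+i\alpha)N_j^c = -N_j^c + i(\omega_A - \alpha N_j^c)$ for $j=1,2$, the two field equations in~\eqref{eq:mLK}, after cancelling the common factor $e^{i\omega_A t}$, become
\begin{equation}
\label{eq:phasor}
z_1 A_1 = \kappa\, e^{i(\delta_A - \Phi)} A_2,
\qquad
z_2 A_2\, e^{i\delta_A} = \kappa\, e^{-i\Phi} A_1 ,
\end{equation}
while the carrier equations in steady state give $A_j^2 = f(N_j^c)$ with $f(N) := (P-N)/(1+2N)$. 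Throughout I restrict to genuine lasing states, $A_1,A_2>0$; if one amplitude vanished, \eqref{eq:phasor} would force the other to vanish too, leaving only the trivial off state.

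First I would settle $(i)\Leftrightarrow(ii)$, which is a consequence of the carrier balance alone. Since $A_j^2 = f(N_j^c) > 0$, each $N_j^c$ lies in $(-\tfrac12, P)$, on which $f'(N) = -(1+2P)/(1+2N)^2 < 0$, so $f$ is a strictly decreasing bijection onto $(0,\infty)$. Hence $A_1 = A_2 \Leftrightarrow A_1^2 = A_2^2 \Leftrightarrow f(N_1^c) = f(N_2^c) \Leftrightarrow N_1^c = N_2^c$.

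Next I would bring in the phase difference. Conjugating the second equation in~\eqref{eq:phasor}, multiplying it by the first, and cancelling the positive factor $A_1 A_2$ gives the single clean relation
\begin{equation}
\label{eq:zzbar}
z_1\,\overline{z_2} = \kappa^2\, e^{2 i \delta_A},
\end{equation}
(together with $|z_1||z_2| = \kappa^2$ from moduli), a companion to the frequency relation~\eqref{eq:freqs}. For $(i)\Rightarrow(iii)$: if $N_1^c = N_2^c$ then $z_1 = z_2 =: z$, and \eqref{eq:zzbar} reads $|z|^2 = \kappa^2 e^{2 i \delta_A}$; the left side is a positive real, so $e^{2 i \delta_A} = 1$ and $\delta_A \in \{0,\pi\}$. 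For $(iii)\Rightarrow(i)$: a short computation gives $\mathrm{Im}(z_1\overline{z_2}) = \omega_A(N_1^c - N_2^c)$, so the imaginary part of~\eqref{eq:zzbar} is $\omega_A(N_1^c - N_2^c) = \kappa^2 \sin 2\delta_A$; when $\delta_A \in \{0,\pi\}$ the right side vanishes, and since $\omega_A \neq 0$ we conclude $N_1^c = N_2^c$. These implications together close the three-way equivalence.

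The step I expect to be the main obstacle is the degenerate case $\omega_A = 0$ in $(iii)\Rightarrow(i)$, where the identity above collapses to $0=0$. Here one must argue directly: setting $\omega_A = 0$ makes $z_j = -(1+i\alpha)N_j^c$, so \eqref{eq:phasor} is compatible with real positive amplitudes only when $\tan C_p = -\alpha$, i.e.\ at two isolated values of the coupling phase; at those values the phasor and carrier equations reduce to $|N_1^c|\, f(N_1^c) = |N_2^c|\, f(N_2^c)$ with $N_1^c N_2^c > 0$, and one closes the argument via a monotonicity property of $N\mapsto |N|f(N)$ on the relevant carrier branch (or by showing that the configuration does not in fact occur). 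The remaining implications are essentially bookkeeping: in particular no spurious value $\delta_A = \pm\pi/2$ can creep into $(i)\Rightarrow(iii)$, because \eqref{eq:zzbar} delivers the full equality $e^{2 i \delta_A} = 1$ rather than merely $\sin 2\delta_A = 0$, and one only needs to keep the amplitude sign convention $A_j>0$ consistent throughout.
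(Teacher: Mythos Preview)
Your argument is correct and follows essentially the same route as the paper: both proofs use the carrier steady-state relation $A_j^2=(P-N_j)/(1+2N_j)$ for $(i)\Leftrightarrow(ii)$ and the two field-phasor relations for the link to $\delta_A$. The packaging differs: the paper equates the two phasor equations (once their right-hand sides agree) and then separates real and imaginary parts of $\frac{A_1}{A_2}z_1=\frac{A_2}{A_1}z_2$, whereas you multiply one equation by the conjugate of the other to obtain the amplitude- and $C_p$-free identity $z_1\overline{z_2}=\kappa^2 e^{2i\delta_A}$. Both routes collapse to the same decisive relation $\omega_A(N_1^c-N_2^c)=0$ for $(iii)\Rightarrow(i)$; your version simply reaches it in one line and makes the role of $e^{2i\delta_A}$ (ruling out $\delta_A=\pm\pi/2$) transparent.

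One point worth noting: the degenerate case $\omega_A=0$ that you flag as the main obstacle is \emph{also} present in the paper's proof---the paper's ``after cancellations'' step is precisely $\omega_A(N_1-N_2)=0$---but the paper does not comment on it. Your honesty here is appropriate; however, your proposed resolution via monotonicity of $N\mapsto |N|f(N)$ will not quite work as stated, since $g(N)=N(P-N)/(1+2N)$ is not monotone on $(0,P)$ (it vanishes at both endpoints). If you want to close this gap fully you will need to combine $N_1 f(N_1)=N_2 f(N_2)$ with the modulus constraint $N_1 N_2=\kappa^2/(1+\alpha^2)$ and argue from there, or simply restrict the statement to the generic situation $\omega_A\neq 0$, which is implicitly what the paper does.
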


\begin{proof}
Considering equilibrium solutions for $N_{1,2}$ for Eqs.~\eqref{eq:N1dot}~\eqref{eq:N2dot} yields
\begin{eqnarray}\label{eq:AN}
A^2_1 = \frac{P-N_1}{2 N_1 + 1} \qquad A^2_2 = \frac{P-N_2}{2 N_2 + 1} 
\end{eqnarray}
Since $A_{1,2} \ge 0$, functions~\eqref{eq:AN} give 1-1 relation between $N_{1,2}$ and $A_{1,2}$ therefore assuming condition (i) implies (ii). \\

Noting functions~\eqref{eq:AN} are involutions, therefore $N_{1,2}$ can be switched with $A^2_{1,2}$  Then using the same logic as above, assuming (ii) implies (i).

Substituting ansatz~\eqref{eq:1C} into the equations for the electric fields~\eqref{eq:E1dot}~\eqref{eq:E2dot} gives two complex conditions
\begin{eqnarray}
\frac{A_1}{A_2} \left( i\omega_A - (1 + i\alpha) N_1 \right) = \kappa e^{-i{C_p}} e^{i\omega_A \tau} e^{{ +}i\delta_A} \label{eq:E1}\\
\frac{A_2}{A_1} \left( i\omega_A - (1 + i\alpha) N_2 \right) = \kappa e^{-i{C_p}} e^{i\omega_A \tau} e^{{ -}i\delta_A} \label{eq:E2}
\end{eqnarray}
Assuming (ii) then the ratio of the two amplitudes disappear in Eqs.~\eqref{eq:E1}\eqref{eq:E2}. As (ii) implies (i), both left hand sides are equal which gives the condition
\begin{equation}
e^{+i\delta_A} = e^{-i\delta_A} \nonumber  
\end{equation}
Therefore $\sin\delta_A=0$ and $\delta_A=0,\pi$. Therefore assuming (ii) implies (iii).\\

Finally, assuming (iii), i.e $\delta_A=0,\pi$, the right hand sides of Eqs.~\eqref{eq:E1}\eqref{eq:E2} are equal. Combining these two equations and splitting real and imaginary parts gives the following two real conditions;
\begin{eqnarray}
\frac{A_1}{A_2} &=& \frac{N_2}{N_1} \frac{A_2}{A_1} \nonumber \\
\left(\omega_A - \alpha N_1\right) \frac{A_1}{A_2} &=& \left(\omega_A - \alpha N_2\right) \frac{A_2}{A_1} \nonumber
\end{eqnarray}
Substituting one into the other, after cancellations implies $N_1=N_2$ or condition (ii). This suffices as a proof.\qed
\end{proof}

The above conditions justify the distinction between symmetric and symmetry-broken states used in this paper. Furthermore it distinguishes between ``in-phase'' and ``anti-phase'' one-colour symmetric states.

\section{The frequency $\omega_A$ for one-colour states}
\label{ap:omegaA}

In order to obtain the frequency $\omega_A$ appearing in the one-colour ansatz \eqref{eq:1C}, we first derive an explicit relation between the inversion $N^c_1$ and $N^c_2$ as a function of $\omega_A$. Inserting \eqref{eq:1C} into the equations for $\dot{E}_1$ \eqref{eq:E1dot} and $\dot{E}_2$ \eqref{eq:E2dot} yields the matrix equation
\begin{equation}
  \label{eq:A}
  \begin{aligned}
  \left(\begin{array}{cc}
\hat{\alpha} N_{1}^{c}-i\omega_{A} & \kappa e^{-iC_{p}}e^{-i\omega_{A}\tau}\\
\kappa e^{-iC_{p}}e^{-i\omega_{A}\tau} & \hat{\alpha}N_{2}^{c}-i\omega_{A}
\end{array}\right)\left(\begin{array}{c}
A_{1}\\
A_{2}e^{i\delta_{A}}
\end{array}\right)&=0    
\end{aligned}
\end{equation}
where we have used the notation $\hat{\alpha}=1+i\alpha$.  This equation has a non-trivial solutions if its determinant vanishes. For  $\omega_A$ this  leads to the complex condition
\begin{equation}
  \label{eq:2}
  \hat{\alpha}N_{1}^{c}N_{2}^{c}-i\omega_{A}\left(N_{1}^{c}+N_{2}^{c}\right)=\frac{\left[\kappa e^{-iC_{p}}e^{-i\omega_{A}\tau}\right]^{2}+\left(\omega_{A}\right)^{2}}{\hat{\alpha}}
\end{equation}
This allows us to explicitly express $N^c_1$ and $N^c_2$ as functions of  $\omega_A$ via
\begin{equation}
  \label{eq:3}
N_{1/2}^{c}=-\frac{\mbox{Im}\left[\hat{\alpha}^*Q_{A}\right]}{2\omega_{A}}\pm\sqrt{\left(\frac{\mbox{Im}\left[\hat{\alpha}^*Q_{A}\right]}{2\omega_{A}}\right)^{2}-\mbox{Re}\left[Q_{A}\right]}
\end{equation} 
with 
\begin{equation}
  \label{eq:4}
  Q_{A}=\frac{\left[\kappa e^{-iC_{p}}e^{-i\omega_{A}\tau}\right]^{2}+\left(\omega_{A}\right)^{2}}{\hat{\alpha}}.
\end{equation}  Note that the sign $\pm$ in \eqref{eq:3} was chosen in such a way that $N_1^c \geq N_2^c$, however the alternative choice with $\mp$ is equally possible and would imply $N_1^c \leq N_2^c$. 

We can now express $A_2^2 / A_1^2$ in two different ways via,
\begin{equation}\label{eq:A1A2}
{ ( P - N^c_2 )( 2 N^c_1 + 1 ) \over (2 N^c_2 + 1)( P - N^c_1 ) } = { A_2^2 \over A_1^2 } = { (\omega_A - \alpha N^c_1)^2 + (N^c_1)^2 \over \kappa^2 }, 
\end{equation} 
where we have used the two Eqs.~\eqref{eq:AN} on the left hand side,  and the magnitude of the complex expression Eq.~\eqref{eq:E1} on the right hand side.  We then substitute the explicit functions~\eqref{eq:3} of $N_{1/2}^{c}$ into Eq.~\eqref{eq:A1A2} above which gives an implicit solution for the frequency $\omega_A$ of the one-colour state for the system with delay time $\tau$. The frequency $\omega_A$ is finally obtained numerically from this transcendental expression with a secant method yielding the dot-dashed vertical lines in Fig.~\ref{fg:CLMs}.

\section{The frequencies $\omega_A$ and $\omega_B$ for symmetry-broken two-colour states}
\label{ap:omegaA_omegaB}
\begin{figure}
  \centering
  \includegraphics[width=0.7\linewidth,type=pdf,ext=.pdf,read=.pdf]{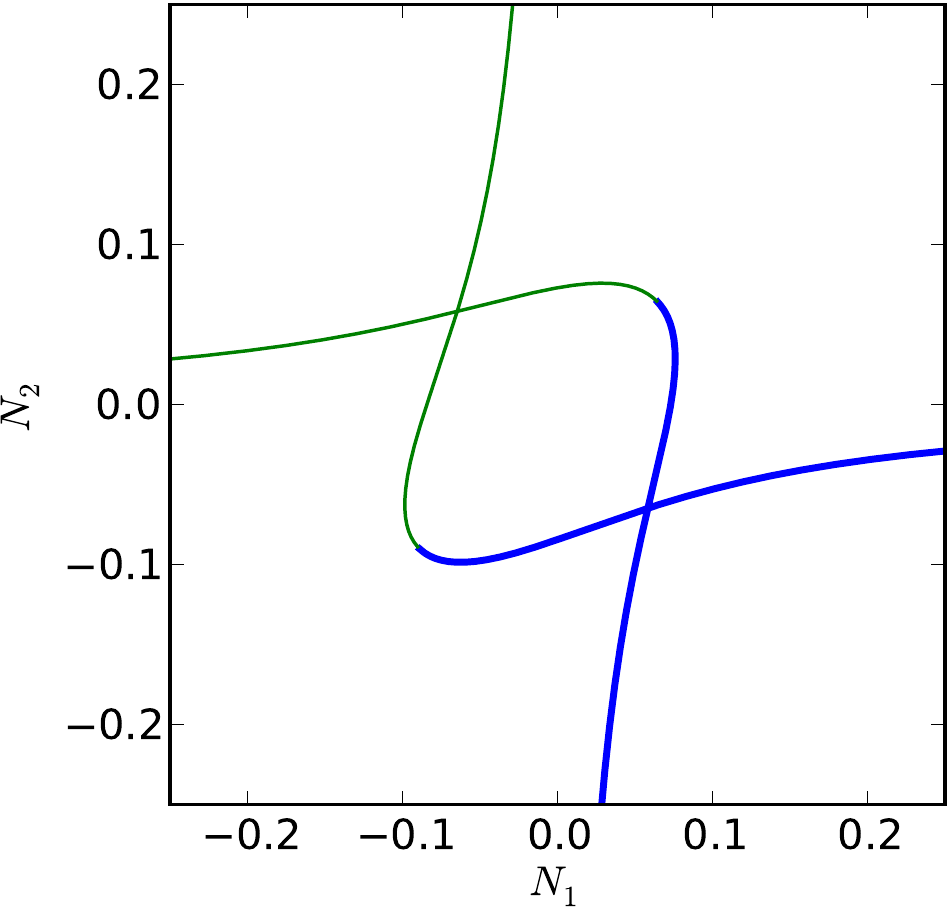}
  \caption{\label{fig:plot_n1_n2} Parametric plot of $\left(N^c_1(\omega_A),N^c_2(\omega_A)\right)$ (blue thick line) according to Eq.~\eqref{eq:3} using the parameter values of Fig.~\ref{fg:2CSB}.  The frequencies $\omega_A$ and $\omega_B$ and the inversions $N^c_1$ and $N^c_2$ of the symmetry-broken two-colour states are obtained from the point where the blue (thick) line self-intersects. The green (thin) line shows a parametric plot of $\left(N^c_2(\omega_A),N^c_1(\omega_A)\right)$. }
\end{figure}
We show, how the frequencies $\omega_A$ and $\omega_B$ appearing in the ansatz \eqref{eq:2C} for the symmetry-broken two-colour states can be obtained analytically. Inserting \eqref{eq:2C} into the equations for $\dot{E}_1$ and $\dot{E}_2$ and sorting the terms for the two frequencies $\omega_{A}$ and  $\omega_{B}$ yields in addition to Eq.~\eqref{eq:A}, a second matrix equation 
\begin{equation}
\label{eq:B}
\begin{aligned}
\left(\begin{array}{cc}
\hat{\alpha} N_{1}^{c}-i\omega_{B} & \kappa e^{-iC_{p}}e^{-i\omega_{B}\tau}\\
\kappa e^{-iC_{p}}e^{-i\omega_{B}\tau} & \hat{\alpha}N_{2}^{c}-i\omega_{B}
\end{array}\right)\left(\begin{array}{c}
B_{1}\\
B_{2}e^{i\delta_{B}}
\end{array}\right)&=0 \\    
  \end{aligned}
\end{equation}

Eqs.~\eqref{eq:A} and \eqref{eq:B} are identical, except for the fact that $\omega_A$ is replaced by $\omega_B$.  The same arguments used in  Appendix~\ref{ap:omegaA} for deriving the expressions for $N^c_{1/2}(\omega_A)$ in Eq.~\eqref{eq:3} also hold for $N^c_{1/2}(\omega_B)$. Therefore explicit expressions for $N^c_{1/2}(\omega_B)$ are obtained by replacing  $\omega_A$  by $\omega_B$ in Eq.~\eqref{eq:3}.  We also have $N^c_1(\omega_A)=N^c_1(\omega_B)$ and $N^c_2(\omega_A)=N^c_2(\omega_B)$.  Therefore we can parametrically plot $\left(N^c_1(\omega),N^c_2(\omega)\right)$ as a function of $\omega$ in the $(N_1, N_2)$ plane as shown in the blue curve of Fig.~\ref{fig:plot_n1_n2} and  the self intersection point of this curve determines the inversions $N^c_1$ and $N^c_2$ and the frequencies $\omega_A$ and $\omega_B$ of the symmetry-broken two-colour states.  We stress that we did not make use of the equations for $\dot{N}_1$ and $\dot{N}_2$ of system \eqref{eq:mLK}, and because the ansatz \eqref{eq:2C} is only 
approximately true, the obtained frequencies are not exact. 

\section{Pitchfork Hopf codimension two points}
\label{ap:location-ph1-point}

The codimension two point PH1 in Figs.~\ref{fg:map} and \ref{fg:blowup} plays an important role in organising the bifurcation structure of system~\eqref{eq:qdot}.  Pitchfork and symmetric Hopf bifurcation lines are obtained in~\cite{YAN04}, here we derive analytical expressions for their intersection and thus the location of the two pitchfork Hopf points, PH1 and PH2 in the $(C_p,\kappa)$ parameter plane.

Let us first evaluate the Jacobian of \eqref{eq:qdot} at an in-phase one-colour state with $q_x=(2 P + \kappa \cos{(C_p)})/(1 - 2\kappa\cos{(C_p)})>0$, $q_y=q_y=0$ and  $N_1 = N_2=-\kappa\cos C_{p}$. In the slightly transformed coordinates $(q_x,N_S,q_y,q_z,N_D)$, where $N_S=(N_1+N_2)/2$ and $N_D=(N_2-N_1)/2$, the Jacobian $J$  has then the form
\begin{equation}
  \label{eq:5}
  J = \left(\begin{array}{cc} 
      J_1 & 0\\ 
      0 & J_2 
    \end{array}\right).
\end{equation}
The matrices $J_1$ and $J_2$ are given by
\begin{eqnarray}
  \label{eq:6}
  J_{1}&=& \left(\begin{array}{cc}
      0 & 2q_{x}\\
      -\frac{1}{2T}\left(1-2\kappa\cos C_{p}\right) & -\frac{1+q_{x}}{T}
    \end{array}\right),\\
  J_{2}&=& \left(\begin{array}{ccc}
      -2\kappa\cos C_{p} & -2\kappa\sin C_{p} & 2\alpha q_{x}\\
2\kappa\sin C_{p} & -2\kappa\cos C_{p} & -2q_{x}\\
0 & \frac{1}{2T}\left(1-2\kappa\cos C_{p}\right) & -\frac{1+q_{x}}{T}
\end{array}\right).
\end{eqnarray}
The Jacobian $J$ therefore decomposes the phase space into two subspaces spanned by the coordinates $(q_x,N_S)$ and $(q_y,q_z,N_D)$ respectively.  At the pitchfork Hopf bifurcation it is required that $J$  has one eigenvalue at zero and an additional pair of complex conjugate eigenvalues on the imaginary axis.  However, we observe that for $\kappa<1/2$ both eigenvalues of $J_1$ have  negative real part.  Therefore it follows that at the pitchfork Hopf bifurcation all three eigenvalues of $J_2$ have zero real part. This in particular implies that  $\text{tr} J_2=0$ which for $0<\kappa<\frac{1}{2}$  gives rise to the condition
\begin{equation}
  \label{eq:7}
  \kappa\cos C_{p}=\frac{1}{4}\left(1-\sqrt{1+\frac{4P+2}{T}}\right)\approx-\frac{2P+1}{4T},
\end{equation}
where the last approximation is correct in first order of $1/T$. We can then use the condition $\det J_2=0$ to obtain an expression for the pitchfork line in Fig.~\ref{fg:map} as 
\begin{equation}
\kappa ( 2 P + 1 ) =  ( \alpha\sin{C_p} - \cos{C_p} ) (1 - 2 \kappa \cos{C_p} ) ( P  + \kappa \cos{C_p} ). 
\end{equation}
Eliminating $C_p$ in the above equation using \eqref{eq:7} yields a quadratic equation in $\kappa^2$ which gives two solutions for positive $\kappa$. For the first non-vanishing order in $1/T$ we find the following expressions for the location of the points PH1 and PH2 in the $(C_p,\kappa)$ parameter plane,
\begin{eqnarray}
\left( C_p, \kappa \right)_{\text{PH1}} &\approx& \left( \cos^{-1}\left[\frac{-\alpha P}{4T} \right], \frac{\alpha P}{2P+1}\right) \nonumber \\
\left( C_p, \kappa \right)_{\text{PH2}} &\approx& \left(-\cos^{-1}\left[\frac{-\alpha}{\sqrt{1+\alpha^{2}}} \right] \nonumber,
\frac{\sqrt{1+\alpha^{2}}}{\alpha} \frac{2P+1}{4T} \right)
\end{eqnarray}
Above expressions are valid for the in-phase one-colour states. The co-dimension two points PH1 and PH2 seen in Fig.~\ref{fg:map} correspond to the in-phase PH1 and the anti-phase PH2. To obtain the corresponding expressions for anti-phase states, the phases $C_p$ can be shifted by $\pi$ in accordance to parameter symmetry~\eqref{eq:psymm}. 


\end{document}